\numberwithin{equation}{section}
\definecolor{Mygrey}{gray}{0.8}
\newcommand{\bea}{\begin{eqnarray}}
\newcommand{\eea}{\end{eqnarray}}
\newcommand{\be}{\begin{eqnarray*}}
\newcommand{\ee}{\end{eqnarray*}}
\newtheorem{theorem}{Theorem}[section]
\newtheorem{corollary}{Corollary}[section]
\newtheorem{example}{Example}[section]
\newtheorem{algorithm}{Algorithm}[section]
\begin{document}
\title[Boolean Networks with Single or Bistable States]{Characterization of Boolean Networks with Single or Bistable States}
\author[Yi Ming Zou]{Yi Ming Zou}
\address{Department of Mathematical Sciences, University of Wisconsin-Milwaukee, Milwaukee, WI 53201, USA} \email{ymzou@uwm.edu}
\begin{abstract}
Many biological systems, such as metabolic pathways, exhibit bistability behavior: these biological systems exhibit two distinct stable states with switching between the two stable states controlled by certain conditions. Since understanding bistability is key for understanding these biological systems, mathematical modeling of the bistability phenomenon has been at the focus of researches in quantitative and system biology. Recent study shows that Boolean networks offer relative simple mathematical models that are capable of capturing these essential information. Thus a better understanding of the Boolean networks with bistability property is desirable for both theoretical and application purposes.   
In this paper, we describe an algebraic condition for the number of stable states (fixed points) of a Boolean network based on its polynomial representation, and derive algorithms for a Boolean network to have a single stable state or two stable states. As an example, we also construct a Boolean network with exactly two stable states for the lac operon's $\beta$-galactosidase regulatory pathway when glucose is absent based on a delay differential equation model. 
\end{abstract}
\maketitle
\date{}

\section{Introduction}
\par
\par
Boolean networks were introduced in Kauffman (1969) as random models of genetic regulatory networks to study biological systems. A recent research focus of Boolean networks theory is to develop foundational theories and algorithms for Boolean networks to address questions arise from biological applications, see for example Albert and Othmer (2003), Bonneau (2008), Davidson and Levine (2008), Karlebach and Shamir (2008), Kinoshita {\it et al.} (2009), and Purnick and Weiss (2009).  To aid the study of complex biological systems, where experiments are usually expensive and time consuming, researchers use mathematical models built based on partial experimental information of these biological systems. Boolean networks offer relatively simpler such models which are capable of capturing some of the key dynamical properties, such as the stable states, of the underlying systems. As discrete time finite state dynamical systems, Boolean networks will eventually revert to certain sets of states called attractors. These attractors can be divided into two categories: stable states and cyclic states. In this paper, we restrict our attention to stable states. 
\par
The computational problem of detecting the stable states of a complex Boolean network has been discussed in several recent publications.  In Just (2006), the computational complexity of detecting the stable states of a Boolean network was discussed. In Zhang {\it et al.} (2007), an algorithm which in the average case identifies all stable states of a random Boolean network with maximum indegree $2$ in $O(1.19^n)$ time was given. In Tamura and Akutsu (2009), an algorithm for the detection of stable states in Boolean networks with non-restricted indegree that runs in $O(1.787^n)$ time was given. 
\par
There are also publications concerning the dynamics of a Boolean network that are supported by its structure, mainly on the connection between stable states and the structure of a Boolean network. In Kauffman {\it et al.} (2004), it was demonstrated that genetic networks with canalyzing Boolean rules are always stable. In Col\'{o}n-Reyes {\it et al.} (2004), the problem of when a Boolean network in which all the up-dating rules are defined by monomials has only stable states (i.e. a fixed point system) was considered. In Rudeanu (2007), a condition for a Boolean network to have only one stable state was given. Laubenbacher and Stigler (2004) and Jarrah {\it et al.} (2007) considered dynamical systems over finite fields (which includes Boolean networks as a special case) using algebraic geometry and tools in computational algebra. Choudhary {\it et al.} (2006) derived a control algorithm by minimizing a composite finite-horizon cost function that is a weighted average over the individual networks of a family of Boolean networks possessing a common set of stable states. Xiao and Dougherty (2007) studied the impact of function perturbations in Boolean networks by focusing on function perturbations in the form of a one-bit change of the truth table and investigated its impact on the stable states. Some discussions on the effects of topology on networks can be found in Nochomovitz and Li (2006), Oikonomou and Cluzel (2006), and Pomerance {\it et al.} (2009).
\par
Bistability is a fundamental phenomenon in biological systems. There have been studies devoted to the understanding of this phenomenon from a quantitative point of view at different levels. See for example, Cracium {\it et al.} (2006) and Chaves {\it et al.} (2008). An algorithm for approximating dynamical systems described by ordinary  delay differntial equations that exhibit bistability was given in Hinkelmann and Laubenbacher (2009), and their result provided further evidence that Boolean networks are capable of capturing the same key properties of biological systems as differential equations. Since circuitry properties can best be achieved by Boolean networks, one may expect Boolean networks to play an increasing role in modeling biological systems, such as genetic circuit construction (Spinzak {\it et~al}., 2005). 
\par
Our main purpose here is to provide a test for a Boolean network to have two stable states based on polynomial presentations of Boolean networks. We first describe a characterization of the number of stable states for a Boolean network by transforming the system of equations that defines the fixed points to a single equation which defines the fixed points. Then we give an algorithm for a Boolean network to have a single stable state. The bistable state algorithm will be based on the algorithm for the single stable state. Our algorithms are suitable for implementation using existing computer algebra systems such as those in MAPLE. We remark that though the detection of a stable state for a Boolean network is believed to be NP hard (Just 2006; Zhang {\it et al.}, 2007), our result here shows that the determinations of whether a Boolean network has only one or two stable states are straight forward and can be done efficiently under the assumption that the routine multiplication and addition can be performed efficiently for the Boolean polynomials that define the Boolean network under consideration. 
\par
The algorithms were tested in MAPLE using the Boolean network of Albert and Othmer (2003) on the expression pattern of the segment polarity genes in {\it Drosophila melanogaster}, and the Boolean network of Zhang {\it et al.} (2008) on the survival signaling in large granular lymphocyte leukemia. Both networks have more than two stable states. However, the stable states of the Boolean network of Zhang {\it et al.} (2008) do exhibit bistable like behavior for apoptosis, that is, one can divide the stable states ($6$ of them, see Appendix II) into two groups with one group corresponds to the ``on'' state for apoptosis and the other one corresponds to the ``off'' state. Since these and other published Boolean networks we tested do not have exactly two stable states, we also construct a Boolean network for the lac operon's $\beta$-galactosidase regulatory pathway which has exactly two stable states, one corresponds to the lac operon being at the ``on'' position and the other corresponds to the lac operon being at the ``off'' position. This captures the essential known information of the bistability behavior of the lac operon.  
\section{Main Results}
\par
A Boolean network with $n$ nodes can be given by a Boolean polynomial function
\bea\label{e1}
\mathbf{f}=(f_1,\ldots,f_n):\{0,1\}^n\rightarrow\{0,1\}^n,
\eea
where $\{0,1\}^n$ is the state space of all sequences of length $n$ formed by $0$ and $1$, and $f_1,\ldots,f_n$ are Boolean polynomials in $n$ variables $x_1,\ldots, x_n$. We will use either the logical operations {\bf OR} ($\vee$), {\bf AND} ($\wedge$), and {\bf NOT} ($\neg$), or the modulo $2$ arithmetic operations addition and multiplication to perform the calculations for Boolean variables and polynomials. The correspondence is as follows:
\be
x_i\wedge x_j = x_ix_j,\;\;
x_i\vee x_j = x_i+x_j+x_ix_j,\;\; \neg x_i = x_i+1.
\ee
To study the dynamical properties of a Boolean network, we consider the time-discrete dynamical system defined by:
\be
\mathbf{f}:(x_1(t),\ldots,x_n(t))\mapsto (x_1(t+1),\ldots,x_n(t+1)).
\ee
That is, the functions $f_i,\;1\le i\le n$, give the updating rules for the nodes, and the state of the $i$th node at time $t+1$ is given by the function value $f_i(x_1(t),\ldots,x_n(t))$. For gene regulatory networks, the variables $x_1,\ldots,x_n$ represent the genes and the functions $f_1,\ldots,f_n$ give the gene regulatory rules. If $x_i=1$ the corresponding gene is expressed ({\bf ON}), and if $x_i=0$ the
gene is not expressed ({\bf OFF}). The state space graph of a Boolean network $\mathbf{f}$ is defined to be the directed graph with the vertices given by the set $\{0,1\}^n$, and with the directed edges defined by function $\mathbf{f}$: there is a directed edge from vertex $\mathbf{v}_1$ to vertex $\mathbf{v}_2$ if the value of $\mathbf{f}$ at $\mathbf{v}_1$ is $\mathbf{v}_2$. A state $\mathbf{x} = (x_1,x_2,\ldots,x_n)\in \{0,1\}^n$ is a stable state of a Boolean network given in the form of a polynomial function as in (\ref{e1}) if it is a solution of the system of equations 
\bea\label{e2}
f_i(x_1,x_2,\ldots,x_n) = x_i,\quad 1\le i\le n.
\eea
That is, it is a {\it fixed point} of the Boolean network.
\par
We let $B = \{0,1\}$ and let $B[\mathbf{x}]$ be the set of all Boolean polynomials functions $B^n\rightarrow B^n$. Given a Boolean polynomial function $\mathbf{f}$ as in (\ref{e1}), we define the following associated Boolean polynomial
\bea\label{e3}
m_{\mathbf{f}} = 1+\prod_{i=1}^{n}(f_i+x_i+1),
\eea
and let $I_{\mathbf{f}}$ be the ideal of $B[\mathbf{x}]$ generated by $m_{\mathbf{f}}$, i.e.
\be
I_{\mathbf{f}} = (m_{\mathbf{f}}):= m_{\mathbf{f}}B[\mathbf{x}] = \{m_{\mathbf{f}}\cdot g\;|\; g\in B[\mathbf{x}]\}.
\ee
Let $[B[\mathbf{x}]:I_{\mathbf{f}}]$ be the index of $I_{\mathbf{f}}$ in $B[\mathbf{x}]$, i.e. the number of elements in the quotient $B[\mathbf{x}]/I_{\mathbf{f}}$. Now we can state the following theorem.
\begin{theorem} The number of stable states of $\mathbf{f}$ is equal to $\log_2[B[\mathbf{x}]:I_{\mathbf{f}}]$. In particular, $\mathbf{f}$ has a unique stable state if and only if $[B[\mathbf{x}]:I_{\mathbf{f}}] = 2$; and $\mathbf{f}$ has a $2$ stable states if and only if $[B[\mathbf{x}]:I_{\mathbf{f}}] = 2^2$.
\end{theorem}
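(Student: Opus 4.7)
The plan is to identify $B[\mathbf{x}]$ with the ring of all set-functions $B^n \to B$ under pointwise modulo-$2$ arithmetic (every such function is realized by a polynomial, since the Lagrange-type indicators $\prod_i x_i^{a_i}(1+x_i)^{1-a_i}$ span the ring after reducing $x_i^2 = x_i$). Under this identification, ideal membership and coset counting both reduce to value-by-value bookkeeping on the $2^n$ points of $B^n$.

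First I would observe that $f_i(\mathbf{x})+x_i+1$ equals $1$ exactly when $f_i(\mathbf{x})=x_i$ and equals $0$ otherwise. Hence $\prod_{i=1}^n(f_i+x_i+1)$ evaluates to $1$ at $\mathbf{x}$ precisely when every equation of (\ref{e2}) holds, i.e.\ when $\mathbf{x}$ is a fixed point. Therefore $m_{\mathbf{f}}$ is the pointwise indicator of the \emph{non}-fixed-points of $\mathbf{f}$, and its zero set $V\subseteq B^n$ is exactly the set of stable states.

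Next I would prove that $I_{\mathbf{f}}$ coincides with the set of all $g\in B[\mathbf{x}]$ that vanish on $V$. One inclusion is immediate, since any $g = m_{\mathbf{f}}\cdot h$ vanishes wherever $m_{\mathbf{f}}$ does. For the reverse, given $g$ that vanishes on $V$, I would construct $h$ pointwise by setting $h(\mathbf{x})=g(\mathbf{x})$ when $\mathbf{x}\notin V$ and $h(\mathbf{x})=0$ when $\mathbf{x}\in V$; because $m_{\mathbf{f}}$ is $1$ off $V$ and $0$ on $V$, the product $m_{\mathbf{f}}\cdot h$ agrees with $g$ at every point of $B^n$, and $h$ is a legitimate element of $B[\mathbf{x}]$ by the opening identification.

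Finally I would count: $|B[\mathbf{x}]|=2^{2^n}$, while an element of $I_{\mathbf{f}}$ is pinned to $0$ at the $|V|$ points of $V$ and free at the remaining $2^n-|V|$ points, giving $|I_{\mathbf{f}}|=2^{2^n-|V|}$. Therefore $[B[\mathbf{x}]:I_{\mathbf{f}}]=2^{|V|}$, and $\log_2[B[\mathbf{x}]:I_{\mathbf{f}}]=|V|$ is the number of stable states; the two ``in particular'' statements follow by specializing to $|V|=1$ and $|V|=2$. The only real obstacle is the opening identification of $B[\mathbf{x}]$ with the full function ring $B^{B^n}$, which must be in hand before the pointwise construction of $h$ can be declared to land in $B[\mathbf{x}]$; once that is established the rest of the argument is indicator-function arithmetic plus a cardinality count, with no Gr\"obner basis computation required.
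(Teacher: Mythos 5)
Your proof is correct, and although it opens with the same reduction as the paper---the observation that $m_{\mathbf{f}}$ vanishes exactly at the fixed points---it completes the count by a genuinely different mechanism. The paper never identifies $I_{\mathbf{f}}$ exactly; it sandwiches $\dim_B(B[\mathbf{x}]/I_{\mathbf{f}})$ using the indicator polynomials $p_{\mathbf{a}}$: for $\mathbf{a}$ in the solution set $S$ the images of the $p_{\mathbf{a}}$ are linearly independent in the quotient (giving $|S|\le\log_2[B[\mathbf{x}]:I_{\mathbf{f}}]$), while for $\mathbf{a}\notin S$ the $p_{\mathbf{a}}$ lie in $I_{\mathbf{f}}$ and span a subspace of dimension $2^n-|S|$, forcing the reverse inequality. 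You instead prove the sharper structural statement that $I_{\mathbf{f}}$ is precisely the set of polynomials vanishing on $V=S$, which lets you compute $|I_{\mathbf{f}}|=2^{2^n-|V|}$ and read off the index in one step. The price of your route is that the identification of $B[\mathbf{x}]$ with the full function ring $B^{B^n}$ must be in place before your pointwise construction of the cofactor $h$ is legitimate---but that is exactly the interpolation fact the paper also relies on, via the evaluations $p_{\mathbf{a}}(\mathbf{a})=1$ and $p_{\mathbf{a}}(\mathbf{b})=0$ for $\mathbf{b}\ne\mathbf{a}$, so nothing extra is really being assumed. What your version buys is a cleaner one-shot cardinality count and, as a free byproduct, the identity $m_{\mathbf{f}}=1+\sum_{\mathbf{a}\in V}p_{\mathbf{a}}$ recorded in the paper's subsequent corollary; what the paper's version buys is that it works directly with linear independence in the quotient and never needs to exhibit a cofactor explicitly. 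Both arguments handle the degenerate case $V=\emptyset$, where $m_{\mathbf{f}}=1$ and the index is $1$, without difficulty.
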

\begin{proof} The system of equations given by (\ref{e2}) is equivalent to
\be
\bigvee_{i=1}^n(f_i+x_i) = 0,
\ee
which in turn is equivalent to
\be
0 &=& \neg(\neg(\bigvee_{i=1}^n(f_i+x_i))) = \neg(\bigwedge_{i=1}^n(f_i + x_i + 1))\\
{}  & =& 1 + \prod_{i=1}^n(f_i + x_i + 1)=m_{\mathbf{f}}.
\ee
Note that $I_{\mathbf{f}}$ is a subspace of the vector space $B[\mathbf{x}]$ (over $B$). Assume that $I_{\mathbf{f}}\ne B[\mathbf{x}]$, then $m_{\mathbf{f}}\ne 1$ and the equation $m_{\mathbf{f}}=0$ has at least one solution. Let $S$ be the set of solutions of $m_{\mathbf{f}}=0$. For an element $\mathbf{a} = (a_1,\ldots, a_n) \in B^n$, we define 
\bea\label{e4}
p_{\mathbf{a}} = (x_1+a_1+1)\cdots (x_n+a_n+1).
\eea
Note that $p_{\mathbf{a}}(\mathbf{a}) = 1$ and $p_{\mathbf{a}}(\mathbf{b}) = 0$ if $\mathbf{b}\ne \mathbf{a}$. Then for distinct $\mathbf{a}, \mathbf{b} \in S$, the polynomials $p_{\mathbf{a}},p_{\mathbf{b}}$, and $p_{\mathbf{a}}+p_{\mathbf{b}}$
are all not in $I_{\mathbf{f}}$. Thus the images of the set of polynomials $\{p_{\mathbf{a}}\;|\;\mathbf{a}\in S\}$ form a linearly independent subset in $B[\mathbf{x}]/I_{\mathbf{f}}$, and therefore $|S|\leq\log_2[B[\mathbf{x}]:I_{\mathbf{f}}]$. On the other hand, if $\mathbf{a}\notin S$, then $p_{\mathbf{a}}\in I_{\mathbf{f}}$. Since the set $\{p_{\mathbf{a}}\;|\;\mathbf{a}\notin S\}$ is linearly independent in $I_{\mathbf{f}}$, $\dim(I_{\mathbf{f}})\geq |S|$, so $\dim(B[\mathbf{x}]/I_{\mathbf{f}})\leq |S|$, and thus $\log_2[B[\mathbf{x}]:I_{\mathbf{f}}]\le |S|$.
\end{proof}
\par
As a consequence of the proof of the above theorem, we have the following corollary.
\begin{corollary}\label{c1}
Given a Boolean network $\mathbf{f}$ as defined by (\ref{e1}). 
\par
(1) The fixed points of $\mathbf{f}$ are the solutions of a single equation $m_{\mathbf{f}}=0$. 
\par
(2) The Boolean network $\mathbf{f}$ has exactly the elements in $S\subseteq B^n$ as its stable states if and only if 
\be
m_{\mathbf{f}}=1+\sum_{\mathbf{a}\in S}p_{\mathbf{a}}.
\ee
\end{corollary}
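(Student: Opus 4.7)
The plan is to read off part (1) directly from the first displayed chain of equivalences in the theorem's proof, and then to prove part (2) by a pointwise comparison on $B^n$ of the two Boolean polynomials appearing in the asserted equality.

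Part (1) is essentially free: the proof of the theorem already establishes that the system $f_i(\mathbf{x}) = x_i$ for $1 \le i \le n$ is equivalent to $\bigvee_i (f_i + x_i) = 0$, which after negation and De Morgan becomes $\prod_i (f_i + x_i + 1) = 1$, i.e. $m_{\mathbf{f}} = 0$. So I would simply remark that the zero set of the single polynomial $m_{\mathbf{f}}$ coincides with the fixed-point set of $\mathbf{f}$ by this equivalence, which is exactly the content of (1).

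For part (2), I would show that both sides of the claimed identity take the same value at every $\mathbf{b} \in B^n$ and then invoke uniqueness of the polynomial representation. In the ($\Leftarrow$) direction, assuming $m_{\mathbf{f}} = 1 + \sum_{\mathbf{a} \in S} p_{\mathbf{a}}$ and substituting any $\mathbf{b} \in B^n$, the indicator property $p_{\mathbf{a}}(\mathbf{b}) = \delta_{\mathbf{a},\mathbf{b}}$ noted just before \eqref{e4} gives $m_{\mathbf{f}}(\mathbf{b}) = 0$ if and only if $\mathbf{b} \in S$; combined with part (1), this forces $S$ to be precisely the fixed-point set. For the ($\Rightarrow$) direction, assume the fixed-point set of $\mathbf{f}$ is exactly $S$ and compute $m_{\mathbf{f}}(\mathbf{b})$ directly from its defining product: if $\mathbf{b} \in S$, every factor $f_i(\mathbf{b}) + b_i + 1$ equals $1$ so $m_{\mathbf{f}}(\mathbf{b}) = 0$; if $\mathbf{b} \notin S$, some factor vanishes so $m_{\mathbf{f}}(\mathbf{b}) = 1$. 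These are exactly the values of $1 + \sum_{\mathbf{a} \in S} p_{\mathbf{a}}$, so the two polynomials coincide as elements of $B[\mathbf{x}]$.

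The only step beyond routine verification is the passage from pointwise equality of values on $B^n$ to equality inside $B[\mathbf{x}]$. Since $B[\mathbf{x}]$ is defined as the set of Boolean polynomial functions on $B^n$ (equivalently, polynomials reduced modulo the relations $x_i^2 + x_i$), any two polynomials with identical value tables are equal, so no real obstacle arises here; I would include a one-line remark to this effect to avoid ambiguity.
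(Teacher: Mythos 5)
Your proof is correct and follows essentially the same route as the paper, which states the corollary without a separate proof as an immediate consequence of the theorem's argument: part (1) is the displayed chain of equivalences reducing the system (\ref{e2}) to $m_{\mathbf{f}}=0$, and part (2) amounts to recognizing $1+m_{\mathbf{f}}$ as the indicator function of the fixed-point set expanded in the basis $\{p_{\mathbf{a}}\}$. Your explicit pointwise verification, together with the closing remark that elements of $B[\mathbf{x}]$ are determined by their value tables, simply fills in the details the paper leaves implicit.
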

\par
This leads to the following single stable state Boolean network testing algorithm:
\par
\begin{algorithm}\label{a1} 
\par
Single Stable State Boolean Network Test.
\par\medskip
{\bf INPUT:} $\mathbf{f}=(f_1,f_2,\ldots,f_n),\; f_i\in B[\mathbf{x}],\; 1\le i\le n$.
\par
{\bf OUTPUT:} YES or NO; if YES also returns the single stable state of $\mathbf{f}$.
\par
\medskip
1. For each $1\le i\le n$, determine if the Boolean polynomial $1+m_{\mathbf{f}}$ has the factor $x_i$ or not. Let $I$ be the subset of $\{1,2,\ldots, n\}$ such that the corresponding answers are yes.
\par
2. For $\forall j\notin I$, determine if the Boolean polynomial $1+m_{\mathbf{f}}$ has the factor $x_i+1$ or not.
\par
3. If all the answers in step 2 are yes, then return ``YES'' and the single stable state $\mathbf{a} = (a_1,a_2,\ldots, a_n)$, where
\be
a_i = \left\{ \begin{array}{rll}
              1 & \mbox{if} & i\in I;\\
              0 & \mbox{if} & i\notin I.
              \end{array}\right.
\ee
Otherwise, return ``NO''.
\end{algorithm}
\begin{proof}
By Corollary \ref{c1}, $\mathbf{f}$ has a single stable state is equivalent to the condition $m_{\mathbf{f}}=1+p_{\mathbf{a}}$ for some $\mathbf{a}\in B^n$. That is
\bea\label{e5}
\prod_{i=1}^{n}(f_i+x_i+1) = p_{\mathbf{a}}\;\; \mbox{for some $\mathbf{a}\in B^n$}.
\eea
Note that in any case, the element $p_{\mathbf{a}}$ is of the form 
\bea\label{e6}
x_{i_1}\cdots x_{i_s}(x_{i_{s+1}}+1)\cdots (x_{i_{n}}+1),
\eea
where $(i_1i_2\cdots i_n)$ is a permutation of $(12\cdots n)$. So we can first find out which of $x_i$ ($1\le i\le n$) is a factor of the left hand side of equation (\ref{e5}); then for those $x_i$ which are not factors, determine whether $x_i+1$ is a factor instead. This procedure will decide if the left hand side of equation (\ref{e5}) is the form as in (\ref{e6}). To finish the proof, note that the solution of $1+p_{\mathbf{a}} = 0$, when $p_{\mathbf{a}}$ is written as in (\ref{e6}), is given by $x_{i_j} = 1$ for $1\le j\le s$ and $x_{i_j} = 0$ for $s+1\le j\le n$.  
\end{proof}
\par
We remark that it is clear that the computation needed to run this algorithm is the determination whether the variables $x_i$ or $x_i+1$ are factors of the Boolean polynomial $1+m_{\mathbf{f}}$ or not. For $x_i$, this can be done by inspecting the terms. For $x_i+1$, one can make a substitution by substitute in $x_i+1$ for $x_i$ and thus transferring the problem to the determination of whether $x_i$ is a factor or not. Clearly, this can be done efficiently if the computation of the polynomial $1+m_{\mathbf{f}}$ can be done efficiently. We also want to remark that by comparing with the result of Rudeanu (2007), it is clear that the conditions in Algorithm \ref{a1} are simpler and easier to verify than the conditions provided in Rudeanu (2007) (see also a counting of the unique fixed point systems for the case $n=2$ in section 3 below). A simple extension of Algorithm (\ref{a1}) can be used to determine if a Boolean network has exactly two stable states or not. 
\par
\begin{algorithm}\label{a2} Bistable State Boolean Network Test.
\par\medskip
{\bf INPUT:} $\mathbf{f}=(f_1,f_2,\ldots,f_n),\; f_i\in B[\mathbf{x}],\; 1\le i\le n$.
\par
{\bf OUTPUT:} YES or NO; if YES also returns the two stable states of $\mathbf{f}$.
\par\medskip
1. Apply Algorithm \ref{a1} to $1+m_{\mathbf{f}}$ to find the possible factors of the forms $x_i$ or $x_i+1$. If $1+m_{\mathbf{f}}$ is a product of these forms, then return ``NO'' and terminate the procedure, since the Boolean network has a single stable state. Otherwise, assume that 
\bea\label{e7}
1+m_{\mathbf{f}} = x_{i_1}\cdots x_{i_s}(x_{j_1}+1)\cdots (x_{j_t}+1)g,
\eea 
where $g\ne 1$ is an element in $B[\mathbf{x}]$ such that if 
\be
i\notin I:=\{i_1,\ldots, i_s, j_1,\ldots, j_t\}
\ee
 then
\be
g(x_i = 0) \ne 0,\;\; g(x_i=1)\ne 0.
\ee
\par\medskip
2. Pick any $i\notin I$, and put $c_i = 1,\;d_i = 0$. Apply Algorithm \ref{a1} to 
\bea\label{e8}
g_0 : = g(x_i=0)\;\;\mbox{and}\;\; g_1:=g(x_i=1),
\eea
with the set of variables $\{x_j\;|\;j\notin I\;\mbox{and}\;j\ne i\}$. If
\bea\label{e9}
g_0 = \prod_{\substack{j\notin I\\ j\ne i}}(x_j+c_j)\;\;\mbox{and}\;\;g_1 = \prod_{\substack{j\notin I\\ j\ne i}}(x_j+d_j)
\eea
such that $c_j+d_j = 1,\;\forall j\notin I$ and $j\ne i$, then return ``YES'' and the two stable states $\mathbf{a} = (a_1,\ldots,a_n)$ and $\mathbf{b}=(b_1,\ldots, b_n)$ of $\mathbf{f}$, where 
\begin{gather*}
a_{i_k} = 1,\;1\le k\le s;\;\; a_{j_{k}} = 0,\; 1\le k\le t;\;\; \\
a_j = c_j +1,\; j\notin I;
\end{gather*}
and 
\begin{gather*}
b_{i_k} = 1,\;1\le k\le s;\;\; b_{j_{k}} = 0,\; 1\le k\le t;\;\;\\
 b_j = d_j +1,\; j\notin I.
\end{gather*} 
Otherwise, return ``NO''.
\end{algorithm}
\par
\begin{proof} By Corollary \ref{c1}, the fact that the Boolean network $\mathbf{f}$ has exactly two stable states is equivalent to the equation $m_{\mathbf{f}} = 1+p_{\mathbf{a}}+p_{\mathbf{b}}$ for two distinct elements $\mathbf{a}, \mathbf{b}\in B^n$, where $p_{\mathbf{a}}$ and $p_{\mathbf{b}}$ are defined by (\ref{e4}). If $m_{\mathbf{f}} = 1+p_{\mathbf{a}}+p_{\mathbf{b}}$, after factoring out the common factors as in (\ref{e7}), we have 
{\small
\be
p_{\mathbf{a}}+p_{\mathbf{b}} &=& x_{i_1}\cdots x_{i_s}(x_{j_1}+1)\cdots (x_{j_t}+1)\times\\
                     {}       &{}& \quad (\prod_{j\notin I}(x_j+c_j) + \prod_{j\notin I}(x_j+d_j)),
\ee}
where $c_j+d_j = 1,\;\forall j\notin I$. This implies that condition (\ref{e9}) is necessary. Suppose that the algorithm returns ``YES''. We note that according to the definition of $g_0$, we have
\bea\label{e10}
g &=& x_ih+g_0,\;\;\mbox{for some $h\in B[\mathbf{x}]$}\\
{}&=& x_i(h+g_0) + (x_i+1)g_0. \nonumber
\eea
Then according to the definition of $g_1$, by setting $x_i = 1$ in (\ref{e10}), we have $h+g_0 = g_1$. Thus we must have $m_{\mathbf{f}} = 1+p_{\mathbf{a}}+p_{\mathbf{b}}$ for two distinct elements $\mathbf{a}, \mathbf{b}\in B^n$.
\end{proof}
\section{Counting and Examples}
\par
In this section, we use two simple examples to explain the two algorithms given in the previous section. Before giving these examples, we want to consider the counting problem: How many Boolean networks with a fix number of stable states are there? This question can be answered using the formula given on p. 176 of the book by Goulden and Jackson (1983). Since our definition allows a Boolean network with a fix number of  stable states to have other cyclic states, we can find the number of Boolean networks with $n$ nodes that have either a single stable state or two stable states as follows. Let
\be
\Lambda_1 = \{ \mathbf{i} = (i_1,i_2,\ldots)\;|\; 1+ 2i_1+\cdots \le 2^n\}.
\ee
Then the number of Boolean network with $n$ nodes that have a single stable state is given by
\bea\label{e11}
\sum_{\mathbf{i}\in \Lambda_1}\frac{1}{(2^{i_1}i_1!)\cdots}(2^n)^{2^n-q}q!\left(\begin{array}{c} 2^n-1\\ q-1  \end{array}\right),
\eea
where $q = 1 + 2i_1 +\cdots\le 2^n$. We remark that the leading $1$ on the right hand side of the equation for $q$ signifies that there is one stable state, the $2i_1$ signifies that there are $i_1$ cycles of length $2$, etc. 
\par
For example, if $n=2$, then since $2^2 = 4$, there are $3$ possible values for $q$:
\be
\begin{array}{ccll}
q &=& 1, &\mbox{(one stable state, no other cycle)}\\
q &=& 1+2\cdot 1 = 3, &\mbox{(one stable state, one cycle of length $2$)}\\
q &=& 1+3\cdot 1 = 4. &\mbox{(one stable state, one cycle of length $3$)}
\end{array} 
\ee
Using formula (\ref{e11}), we find that the number of Boolean networks in two variables with a unique stable state is (compare with Rudeanu (2007), where $36$ such fixed point systems were claimed):
\be
4^3+\frac{4^{4-3}}{2}3!\left(\begin{array}{c} 3\\ 2  \end{array}\right)
+\frac{4^{4-4}}{3}4!\left(\begin{array}{c} 3\\ 3  \end{array}\right) = 108.
\ee
Among these are the $4^3=64$ Boolean networks without any cycle of length $> 1$.
\par
The number of Boolean network with $n$ nodes that have two stable states is given by
\bea\label{e12}
\sum_{\mathbf{i}\in \Lambda_2}\frac{1}{(2!)(2^{i_1}i_1!)\cdots}(2^n)^{2^n-q}q!\left(\begin{array}{c} 2^n-1\\ q-1  \end{array}\right),
\eea
where 
\be
\Lambda_2 = \{ \mathbf{i} = (i_1,i_2,\ldots)\;|\; 2+ 2i_1+\cdots \le 2^n\},
\ee
and $q = 2 + 2i_1 +\cdots\le 2^n$. In particular, there are $54$ Boolean networks in two variables that have exactly two stable states.
\par 
\medskip
For a discussion of constructing Boolean networks with prescribed attractor structures, we refer the reader to Pal {\it et al.} (2005).
\par\medskip
We now consider two examples. The first example is a Boolean network with a single stable state.
\begin{example} Consider a Boolean network with $3$ nodes defined by
{\small
\be
f_1 &=& x_1 + x_2 + x_3 + x_1x_2x_3,\\
 f_2 &=&  x_1(1 + x_3 + x_2x_3),\\
    f_3 &=&  x_2+x_3+x_1x_3.
\ee}
Then we can easily factor
\be
1+m_{\mathbf{f}} = (x_1+1)(x_2+1)(x_3+1).
\ee
So this Boolean network has a single stable state $000$.
\end{example}
\par
The second example is a Boolean network with more than two stable states.
\begin{example} Let the Boolean network $\mathbf{f}$ with $6$ nodes be defined by
{\small
\be
  f_1 &=& x_1(x_2+1)(x_4+1),\\ 
  f_2 &=& (x_1+1)x_2(x_3+1),\\
  f_3 &=& x_1+x_3+x_1x_3,\\
  f_4 &=& x_2+x_4+x_2x_4,\\ 
  f_5 &=& (x_2+1)(x_4+1)+x_5(x_1+1)(x_3+1)\\
   {} &{}& \quad +x_5(x_1+1)(x_2+1)(x_3+1)(x_4+1),\\
  f_6 &=& (x_1+1)(x_3+1)+x_6(x_2+1)(x_4+1)\\
   {} &{}& \quad +x_6(x_1+1)(x_2+1)(x_3+1)(x_4+1).
\ee}
Apply Algorithm (\ref{a1}), we find that $1+m_{\mathbf{f}}$ has no common factor, so we just let $g = 1+m_{\mathbf{f}}$ and pick $i=1$ to define $g_0$ and $g_1$ as in Algorithm (\ref{a2}):
{\small
\be
g0 &=& g(x_1 = 0) = x_5x_2x_6+x_5x_4x_6+x_5x_2x_6x_3+x_6x_4\\
           {} &{}& \quad +x_6x_2x_3x_4+x_2x_3x_4+ x_5x_2x_3+x_5x_2x_6x_4\\
        {} &{}& \quad +x_5x_6+x_5x_6x_3+x_5x_3+x_3x_4,\\
g1 &=& g(x_1 = 1) = x_5x_2x_3+x_5x_2x_3x_4+x_5x_3x_4+x_5x_3.
\ee}
Apply Algorithm (\ref{a1}) to both $g0$ and $g1$, we find that $g1$ factors as $x_3x_5(x_4+1)(x_2+1)$ while $g0$ does not factor. Thus by Algorithm (\ref{a2}), $\mathbf{f}$ has more than two stable states.
\end{example}
\par
We also tested our algorithms on some published Boolean networks. Two of these are given in the Appendix II. One is the Boolean network of Albert and Othmer (2003) on the expression pattern of the segment polarity genes in {\it Drosophila melanogaster}, the other is the Boolean network of Zhang {\it et al.} (2008) on the survival signaling in large granular lymphocyte leukemia. The former has $21$ nodes, and the latter has $29$ nodes. More detail can found in the Appendix. However, all these Boolean networks have more than two stable states. In the next section, we construct a Boolean network for the lac operon with exactly two stable states as an example.
\par\medskip
\section{A Boolean Network for the Lac Operon}
\par
In this section, we construct a Boolean network for the lac operon and show that it captures the bistability of the lac operon.
\par
The concept of operon was introduced by Jacob {\it et al.} (1960). An operon is made up of several structural genes arranged under a common promoter and regulated by a common operator. It is defined as a set of adjacent structural genes and the adjacent regulatory signals that affect the transcription of the structural genes. The lac operon, one of the most extensively studied operons, controls the metabolism of lactose in Escherichia coli, and it is regulated by several factors including the availability of glucose and lactose. The cell can use lactose as an energy source by producing the enzyme $\beta$-galactosidase to digest lactose into glucose. If there is glucose or there is no lactose present, then it does not produce $\beta$-galactosidase.
\par
The lac operon consists of a small promoter-operator region and three adjacent larger structural genes lacZ, lacY, and lacA. Preceding the lac operon is a regulatory gene lacI that is responsible for producing a repressor protein. In the absence of allolactose (A) the repressor binds to the operator region and prevents the RNA polymerase from transcribing the structural genes (top diagram in Fig. 1). In the presence of lactose (L), lactose is first transported into the cell by the permease and then broken down into glucose, galactose, and allolactose by $\beta$-galactosidase. When allolactose presents, a complex is formed between allolactose and the repressor that makes binding of the repressor to the operator region impossible, then the RNA polymerase bound to the promoter is able to initiate transcription of the structural genes to produce mRNA (M). The lacZ gene encodes the portion of the mRNA that is responsible for the production of $\beta$-galactosidase (B), the lacY gene produces the section of mRNA responsible for the production of a permease (P), and the lacA gene encodes the section of mRNA responsible for the production of thiogalactoside transacetylase (T) which does not play a role in the regulation of the lac operon (bottom diagram in Fig. 1).
\par
\begin{figure}[h]
\begin{center}
\includegraphics[width=3.4in, height=2.5in]{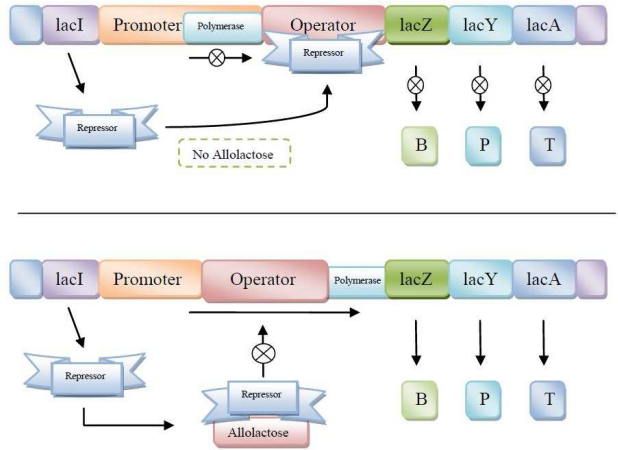} 
\caption{{\footnotesize A schematic representation of the lac operon's $\beta$-galactosidase regulatory pathway when glucose is absence. Top diagram corresponds to the case when allolactose is absence, bottom diagram corresponds to the case when allolactose is presence.}} \label{Fig1}
\end{center}
\end{figure}
\par
Experiments show that the lac operon regulatory pathway is capable of showing bistable behavior (Novik and Wiene, 1957), and mathematical modelings show that perhaps the $\beta$-galactosidase regulatory pathway is the most essential of the regulatory mechanisms in the lac operon (Yildirim and  Mackey, 2003; and Yildirim {\it et. al.}, 2004). The Boolean network constructed here for the lac operon is based on the delay differential equation model of Yildirim {\it et. al.} (2004), which is a reduced version of the model in Yildirim and Mackey (2003).  A Boolean network based on the model of Yildirim and Mackey (2003) was constructed by Hinkelmann and Laubenbacher (2009). We remark that to construct a Boolean network for the lac operon, we can start from the experimental data. However, there are many Boolean networks fit the same set of data, and it is not clear how some of these Boolean models related to the existing differential equation models. Though there may be other equally valid differential equation models correspond to these Boolean models, as an example, it seems natural to construct the Boolean model this way for, say, comparison purpose. Other simplified Boolean models were also constructed in existing literature. For example, Stigler and Veliz-Cuba (2008), Laubenbacher and Sturmfels (2009). These models are different from the model considered here, they do not base on the model of Yildirim {\it et. al.} (2004).
\par
The model of Yildirim {\it et. al.} (2004) consists of three ordinary delay differential equations. It models the concentration of mRNA, the concentration of $\beta$-galactosidase, and the concentration of allolactose:
\bea\label{e13}
\frac{dA}{dt} &=& \alpha_AB\frac{L}{K_L+L} -\beta_AB\frac{A}{K_A+A}-\widetilde{\gamma_A}A, \\ \nonumber
\frac{dM}{dt} &=& \alpha_M\frac{1+K_1(e^{-\mu\tau_M}A_{\tau_M})^n}{K+K_1(e^{-\mu\tau_M}A_{\tau_M})^n}-\widetilde{\gamma_M}M,\\ \nonumber
\frac{dB}{dt} &=& \alpha_Be^{-\mu\tau_B}M_{\tau_B} - \widetilde{\gamma_B}B,
\eea
where the terms with delays are given by
\be
A_{\tau_M} := A(t-\tau_M),\quad M_{\tau_B} := M(t-\tau_B).
\ee
The assumption is that lactose is in a quasi-steady state and hence can be considered as a constant. We now use the method developed in Hinkelmann and Laubenbacher (2009) to construct a Boolean network based on system (\ref{e13}). We introduce new variables $A_1$, $A_{old}$, $M_1$, and $M_{old}$ in order to capture the delay effects correspond to $A_{\tau_M}$ and $M_{\tau_B}$. We assign the variables $x_1,\ldots,x_7$ by
\be
(A, A_1, A_{old}, B, M, M_{1}, M_{old}) = (x_1,x_2,x_3,x_4,x_5,x_6,x_7).
\ee
Then we derive the following Boolean network with $7$ variables (see Appendix I for more detail):
\bea\label{e14}
f_1 &=& x_2x_4+x_1x_2(x_3+1)(x_4+1),\\ \nonumber
f_2 &=& x_1+x_2+x_1x_2,\quad f_3 = (x_2+1)x_4,\\ \nonumber
f_4 &=&  x_4+x_6+x_7+x_4x_6+x_4x_7+x_6x_7+x_4x_6x_7,\\ \nonumber
f_5 &=& x_2 + (x_2+1)x_5(x_7+1),\; f_6 = x_5,\; f_7 = x_2+1.
\eea
Apply Algorithm (\ref{a2}), we find that {\small
\be
1+m_{\mathbf{f}} &=& (x_1+1)(x_2+1)x_3x_4(x_5+1)(x_6+1)x_7\\
        {}       &{}& \qquad + \;x_1x_2(x_3+1)x_4x_5x_6(x_7+1).
\ee}
Thus this Boolean network has two stable states $(0 0 1 1 0 0 1)$ and $(1 1 0 1 1 1 0)$. Analysis of this Boolean network (for this Boolean network, it can be done using the software DVD developed by the Applied Discrete Mathematics Group at Virginia Bioinformatics Institute\footnote{DVD available at: http://dvd.vbi.vt.edu/}) shows that all initial states corresponding to low level of allolactose (both $A$ and $A_1$ are $0$) converge to the first stable state, which corresponds to the lac operon being at the ``off'' state; and that all the initial states corresponding to high level of allolactose (at least one of $A$ and $A_1$ is $1$) converges to the second stable state, which corresponds to the lac operon being at the ``on'' state.
\section{Concluding Remarks}
\par
\par
We have formulated a characterization for the number of stable states of a Boolean network based on polynomial presentations of Boolean networks. We have also proposed two algorithms for testing if a Boolean network is a single stable state or a bi-stable state Boolean network. Our algorithms are suitable for implementation using computational algebra softwares such as MAPLE and SINGULAR. The basic method of finding the stable states of a Boolean network is to solve the associated polynomial system (\ref{e2}), using tools such as Gr\"{o}bner bases to increase the computation efficiency. The advantage of our algorithms is that they do not involve solving a multi-variable Boolean polynomial system. Rather, the efficiency of our algorithms here depends on the efficiency of the routine operations with a Boolean polynomial system. Since there are $2^n$ linearly independent monomials in $B[\mathbf{x}]$, a single polynomial can contain a large number of terms. So we see that in general, if $n$ is large, then even routine computations with these polynomials, such as multiplication and addition, can be complex. If these operations can be handled efficiently for a given Boolean network, our algorithms are then straight forward. For a recent work that aims at the computations of Boolean polynomials, we refer the reader to Brickenstein and Dreyer (2009).
\par
We also considered a Boolean network for the $\beta$-galactosidase regulatory pathway of the lac operon as an example. This switch like Boolean network is based on the delay differential equation model of Yildirim {\it et. al.} (2004). It has two stable states, all the initial states corresponding to the absence of allolactose converge to the ``off'' stable state, and all the initial states corresponding to the presence of allolactose converge to the ``on'' stable state. This example shows some of the advantages of Boolean network models: switch like properties can be achieved more easily via Boolean networks while keeping the same essential information encoded in a differential equation model. This is consistent with the finding of Hinkelmann and Laubenbacher (2009).
\par
Finally, we want to make some comments on using the polynomial $m_{\mathbf{f}}$ defined by (\ref{e3}) to find the fixed points of a Boolean network $\mathbf{f}$. Recall that all fixed points of the Boolean network $\mathbf{f}$ are given by the solutions of $m_{\mathbf{f}} = 0$ (Cor. \ref{c1}). We observe that any method of solving a system of $n$ equations in $n$ variables will provide a method for solving the single equation $m_{\mathbf{f}}= 0$, since we can form a system of $n$ equations with one of them being $m_{\mathbf{f}} = 0$ and the others being trivial (identities). Using MAPLE $11$ on a Dell laptop with the system: Intel(R)Core(TM)2 Duo CPU T9900@3.06GHz with 3.5 GB RAM, we detected $176$ fixed points for the Boolean network of Albert and Othmer (2003) in 29.28 seconds by solving the single equation $m_{\mathbf{f}}=0$, in which, the computation of $m_{\mathbf{f}}$ from $\mathbf{f}$ used $0.45$ second. For the Boolean network of Zhang {\it et al.} (2008), the computation took $21.08$ second, and $6$ fixed points were detected. More detail of these examples are provided in the Appendix II.





%
%

\begin{thebibliography}{}

\bibitem{Albe03} Albert, R. and Othmer, H. (2003), The topology of the regulatory interactions predicts the expression pattern of the segment polarity genes in {\it Drosophila melanogaster}, {\it J. Theor. Biol.} {\bf 223}, 1-18.

\bibitem{Bonn08} Bonneau, R. (2008), Learning biological networks: from modules to dynamics, {\it Nature Chemical Biology} {\bf 4}, 658-664.

\bibitem{Bric09} Brickenstein, M. and Dreyer, A. (2009), PolyBoRi: A Gr\"{o}bner basis frame-work for Boolean polynomials,  {\it J. Symbolic Comput.} {\bf 44 (9)}, 1326--1345.

\bibitem{Cha08} Chaves, M., Eissing, T.,  and Allgower, F. (2008), Bistable biological systems: a characterization through local compact input-to-state stability, {\it IEEE Trans. Automatic Control} {\bf 53(1)}, 87-100. 

\bibitem{cho06} Choudhary, A., Datta, A., Bittner, M. L., and Dougherty, E. R. (2006), Intervention in a family of Boolean networks, {\it Bioinformatics} {\bf 22 (2)}, 226-232.

\bibitem{Col04} Col\'{o}n-Reyes, O., Laubenbacher, R., and Pareigis, B. (2004), Boolean monomial dynamical systems, {\it Annals of Combinatorics} {\bf 8}, 425-439.

\bibitem{Cru06} Craciun,G., Tang, Y., and Feinberg, M. (2006), Understanding bistability in complex enzyme-driven reaction networks, {\it PNAS} {\bf 103(23)}, 8697-8702.

\bibitem{David08} Davidson, E. H. and Levine, M. S. (2008), Properties of developmental gene regulatory networks, {\it PNAS} {\bf 105(51)}, 20063-20066.

\bibitem{Go83} Goulden, I. P., and Jackson, D. M. (1983), Combinatorial Enumeration, John Wiley \& Sons, Inc.
 
\bibitem{Hin09} Hinkelmann, F. and Laubenbacher, R. (2009), Boolean Models of Bistable Biological Systems, preprint, arXiv:0912.2089v1.

\bibitem{Jac60} Jacob, F., Perrin, D., Sanchez, C., and Monod, J. (1960), L'Operon: groupe de g\`{e}ne \`{a} expression par un operatour, {\it C. R. Seances Acad. Sci. III} {\bf 250}, 1727-1729.

\bibitem{Feist09} Feist. A. M., Herrg{\aa}rd, M. J., Thiele, I., Reed, J. L., and Palsson, B. {\O}. (2009), Reconstruction of biochemical networks in microorganisms, {\it Nature Reviews Microbiology} {\bf 7}, 129-143. 

\bibitem{Jar07} Jarrah, A. S., Laubenbacher, R., Stigler, B., and Stillman, M. (2007), Reverse-engineering of polynomial dynamical systems, {\it Adv. in Appl. Math.}, {\bf 39 (4)}, 477-489.

\bibitem{Kar08} Karlebach, G. and Shamir, R. (2008), Modelling and analysis of gene regulatory networks, {\it Nature Reviews Molecular Cell Biology} {\bf 9}, 770-780. 

\bibitem{Kauf69} Kauffman, S. A. (1969), Metabolic stability and epigenesis in randomly constructed genetic nets, {\it J. Theor. Biol.} {\bf 22}, 437-467.

\bibitem{Kauf04} Kauffman, S., Peterson, C., Samuelsson, B., and Troein, C. (2004), Genetic networks with canalyzing Boolean rules are always stable, {\it PNAS} {\bf 101(49)}, 17102-17107.

\bibitem{Kino09} Kinoshita, S-i., Iguchi, K., and Yamada, H. S. (2009), Intrinsic properties of Boolean dynamics in complex networks, {\it J. Theor. Biol.} {\bf 256}, 351-369.

\bibitem{Laub04} Laubenbacher, R. and Stigler, B. (2004), A computational algebra approach to the reverse engineering of gene regulatory networks, {\it J. Theor. Biol.} {\bf 229}, 523-537.

\bibitem{Laub09} Laubenbacher, R. and Stumfels, B.  (2009), Computer algebra in system biology, American Mathematical Monthly,  116 (10), pp. 882-891.

\bibitem{NL06} Nochomovitz, Y. D. and Li, H. (2006), Highly designable phenotypes and mutational buffers emerge from a systematic mapping between network topology and dynamic output, {\it PNSA} {\bf 103(11)}, 4180-4185.

\bibitem{Nov57} Novick, A. and Wiener, M. (1957), Enzyme induction as an all-or-none phenomenon, {\it PNAS} {\bf 43}, 553-566.

\bibitem{Oik06} Oikonomou, P. and Cluzel, P., (2006), Effects of topology on network evolution, {\it Nature Physics} {\bf 2}, 532-536.

\bibitem{Pal05} Pal, R., Ivanov, I., Datta, A., Bittner, M. L., and Dougherty, E. R. (2005), Generating Boolean networks with a prescribed attractor structure, {\it Bioinformatics} {\bf 21(21)}, 4021-4025.

\bibitem{Pome09} Pomerance, A., Ott, E., Girvan, M., and Losert, W. (2009), The effect of network topology on the stability of discrete state models of genetic control, {\it PNAS} {\bf 106(20)}, 8209-8214.

\bibitem{Purn09} Purnick, P. E. M., and Weiss, R. (2009), The second wave of synthetic biology: from modules to systems, {\it Nature Reviews Molecular Cell Biology} {\bf 10}, 410-422. 

\bibitem{Rud07} Rudeanu, S. (2007), Boolean transformations with unique fixed points, {\it Math. Slovaca} {\bf 57(1)}, 1-10.

\bibitem{Spinz05} Spinzak, D. and Elowitz, M. B. (2005), Reconstruction of genetic circuits, {\it Nature} {\bf 438}, 443-448.

\bibitem{Stig08} Stigler, B. and Veliz-Cuba, A. (2008), Network topology as a driver of bistability in the lac operon, arXiv:0807.3995.

\bibitem{Tamu09} Tamura, T. and Akutsu, T. (2009), Detecting a Singleton Attractor in a Boolean Network Utilizing
SAT Algorithms, IEICE Transactions on Fundamentals of Electronics, {\it Communications and Computer Sciences}, {\bf E92-A (2)}, 493-501.

\bibitem{Xiao07} Xiao, Y. and Dougherty, E. R. (2007), The impact of function perturbations in Boolean networks, {\it Bioinformatics}, {\bf 23 (10)}, 1265-1273.

\bibitem{Yil03} Yildirim, N. and  Mackey, M. C. (2003), Feedback regulation in the lactose operon: A mathematical modeling study and comparison with experimental data, {\it Biophys. J}. {\bf 84}, 2841-2851.

\bibitem{Yil04} Yildirim, N., Santilla\`n, M., Horikec, D., and  Mackey, M. C. (2004), Dynamics and bistability in a reduced model of the lac operon, {\it Chaos} {\bf 14(2)}, 279-292.

\bibitem{ZhangR08} Zhang, R. {\it et al.} (2008), Network model of survival signaling in large granular lymphocyte leukemia, {\it PNAS} 105(42), 16308-16313.

\bibitem{Zhan07} Zhang, S-Q. {\it et al.} (2007), Algorithms for finding small attractors in Boolean networks, {\it EURASIP Journal on Bioinformatics and Systems Biology}, doi:10.1155/2007/20180.

\end{thebibliography}

\section*{Appendix I}
We provide some detail for the deriving of the Boolean network (\ref{e14}) in this appendix.
\par
\subsection*{Allolactose}
\par
The gain of allolactose is due to the conversion of lactose mediated by $\beta$-galactosidase. The loss of allolactose is due to its conversion to glucose and galactose (also mediated by $\beta$-galactosidase), the degradation, and the dilution.  Since for this model, the assumption is that lactose  L is in a quasi-steady state, we thus put
\be
f_{A} = (x_{A_1}\wedge x_B)\vee (x_A\wedge x_{A_1}\wedge\neg x_{A_{old}}).
\ee
The term $x_{A_1}\wedge x_B$ corresponds to the first two terms on the right hand side of the differential equation for $A$ in (\ref{e13}), and the term $x_A\wedge x_{A_1}\wedge\neg x_{A_{old}}$ accounts for the last term. Since the change of $A_1$ depends only on $A$, we put
\be
f_{A_1} = x_A\vee x_{A_1}.
\ee
For $A_{old}$, we put
\be
f_{A_{old}} = \neg x_{A_1}\wedge x_B.
\ee
Substitute in the variables $x_1,\ldots,x_4$, we obtain the first three equations in (\ref{e14}). 
\subsection*{mRNA}
\par
The production of mRNA from DNA through transcription is not an instantaneous process. It requires a period of time for RNA polymerase to transcribe the ribosome binding site. Thus we put
\be
f_M &=& x_{A_1}\vee (x_{A_{old}}\wedge \neg x_{M_{old}}),\\
f_{M_1} &=&  x_M,\quad f_{M_{old}} = \neg x_{A_1}.
\ee
In the equation for $f_M$, the two terms correspond to the two terms on the right hand side of the differential equation for $M$ in (\ref{e13}). Substitute in the variables, we get the equations for $f_5$, $f_6$, and $f_7$ in (\ref{e14}).
\subsection*{$\beta$-Galactosidase}
\par
The production of $\beta$-galactosidase through translation of the mRNA also requires a period of time delay, so we put
\be
f_B = x_{M_1} \vee x_{M_{old}}\vee x_B,
\ee
where the first two terms correspond to the first term on the right hand side of the differential equation for $B$ in (\ref{e13}), and the last term accounts for the second term. Substitute in the variables, we get the equation for $f_4$ in (\ref{e14}).
\section*{Appendix II}
\subsection*{The Boolean network of Albert and Othmer (2003)}
\par
We introduce the variables as follows (we refer the reader to Albert and Othmer (2003) for the original Boolean network):
\par
\begin{table}[h]
  \begin{center}
  {\renewcommand{\arraystretch}{1.4}\small 
  \begin{tabular}{|c|c|c|c|c|c|c|c|}
    \hline
    SLP & $wg$ & WG & $en$ & EN & $hh$& HH \\\hline
    $x_1$ & $x_2$ & $x_3$ & $x_4$ & $x_5$ & $x_6$ & $x_7$ \\\hline \hline
    $ptc$ & PTC & PH & SMO & $ci$ & CI & CIA\\ \hline
    $x_8$ & $x_9$ & $x_{10}$ & $x_{11}$ & $x_{12}$ & $x_{13}$ & $x_{14}$\\ \hline\hline
    CIR & WG$_{i-1}$ & WG$_{i+1}$ & HH$_{i-1}$ & HH$_{i+1}$ & $hh_{i-1}$ & $hh_{i+1}$ \\\hline
    $x_{15}$ & $x_{16}$ & $x_{17}$ & $x_{18}$ & $x_{19}$ & $x_{20}$ & $x_{21}$ \\\hline 
 \end{tabular}
  }
  \label{tab: pairing1}
  \vskip 0.5cm
  \caption[Legend of variable names.]{\footnotesize Legend of variable names.}
    \end{center}
\end{table} 
\par
Then the Boolean network is given by the following polynomial functions:
\begin{align*}
f_1&=x_1,\\
f_2&=(x_{15}+1)(x_1(x_2+x_{14})+x_2x_{14}),\\
f_3&=x_2,\\
f_4&=x_1(x_{16}(x_{17}+1)+x_{17})+x_{16}(x_{17}+1)+x_{17},\\
f_5&=x_4,\\
f_6&=x_5(x_{15}+1),\\
f_7&=x_6,\\
f_8&=(x_4+1)x_{13}((x_{11}+1)(x_{20}(x_{21}+1)+x_{21})+x_{11}),\\
f_9&=(x_8+1)x_9(x_{18}+1)(x_{19}+1)+x_8,\\
f_{10}&=((x_8+1)x_9(x_{18}+1)(x_{19}+1)+x_8)(x_{20}(x_{21}+1)+x_{21}),\\
f_{11}&=f_{9}+f_{10}+1,\\
f_{12}&=x_5+1,\\
f_{13}&=x_{12},\\
f_{14}&=x_{13}((x_{11}+1)(x_{21}+1)(x_{20}+1)+1),\\
f_{15}&=f_{14}+x_{13},\\
f_{i}&=x_i\ \text{for }16\le i\le 21.
\end{align*}
\par
\medskip
Then $1+m_{\mathbf{f}}$ is
{\footnotesize
\be
{} &{}& ((x_{15}+1)*(x_1*(x_2+x_{14})+x_2*x_{14})+x_2+1)*(x_2+x_3+1)\\
{} &{}&*(x_1*(x_{16}*(x_{17}+1)+x_{17})+x_{16}*(x_{17}+1)+x_{17}+x_4+1)\\
{} &{}&*(x_4+x_5+1)*(x_5*(x_{15}+1)+x_6+1)*(x_6+x_7+1)\\
{} &{}&*((x_4+1)*x_{13}*((x_{11}+1)*(x_{20}*(x_{21}+1)+x_{21})+x_{11})+x_8+1)\\
{} &{}&*((x_8+1)*x_9*(x_{18}+1)*(x_{19}+1)+x_8+x_9+1)\\
{} &{}&*(((x_8+1)*x_9*(x_{18}+1)*(x_{19}+1)+x_8)*(x_{20}*(x_{21}+1)+x_{21})+x_{10}+1)\\
{} &{}&*((x_8+1)*x_9*(x_{18}+1)*(x_{19}+1)+x_8+((x_8+1)*x_9*(x_{18}+1)\\
{} &{}&*(x_{19}+1)+x_8)*(x_{20}*(x_{21}+1)+x_{21})+x_{11}+1)*(x_5+x_{12})\\
{} &{}&*(x_{12}+x_{13}+1)*(x_{13}*((x_{11}+1)*(x_{21}+1)*(x_{20}+1)+1)+x_{14}+1)\\
{} &{}&*(x_{13}*((x_{11}+1)*(x_{21}+1)*(x_{20}+1)+1)+x_{13}+x_{15}+1).
\ee}
\par
\medskip
Both Algorithm (\ref{a1}) and Algorithm (\ref{a2}) fail, so the Boolean network has more than $2$ stable states. The following $176$ stable states  have been detected using MAPLE 11 (to shorten the length of the list, we have converted the binary sequences to integers):
\par{\tiny
832, 836, 840, 844, 14208, 14212, 14216, 14220, 15233, 15234, 15235, 15237, 15238, 
15239, 15241, 15242, 15243, 15245, 15246, 15247, 245776, 245777, 245778, 245779, 245780, 245781,  
245782, 245783, 245784, 245785, 245786, 245787, 245788, 245789, 245790, 245791, 245792,  
245793, 245794, 245795, 245796, 245797, 245798, 245799, 245800, 245801, 245802, 245803,  
245804, 245805, 245806, 245807, 245808, 245809, 245810, 245811, 245812, 245813, 245814,  
245815, 245816, 245817, 245818, 245819, 245820, 245821, 245822, 245823, 250896, 250912,  
250928, 251921, 251922, 251923, 251937, 251938, 251939, 251953, 251954, 251955, 800640,  
800644, 800648, 800652, 801665, 801666, 801667, 801669, 801670, 801671, 801673, 801674,  
801675, 801677, 801678, 801679, 1049408, 1049412, 1049416, 1049420, 1049424, 1049428,  
1049432, 1049436, 1049440, 1049444, 1049448, 1049452, 1049456, 1049460, 1049464,  
1049468, 1849216, 1849220, 1849224, 1849228, 1849232, 1849236, 1849240, 1849244,  
1849248, 1849252, 1849256, 1849260, 1849264, 1849268, 1849272, 1849276, 1850241,  
1850242, 1850243, 1850245, 1850246, 1850247, 1850249, 1850250, 1850251, 1850253,  
1850254, 1850255, 1850257, 1850258, 1850259, 1850261, 1850262, 1850263, 1850265,  
1850266, 1850267, 1850269, 1850270, 1850271, 1850273, 1850274, 1850275, 1850277,  
1850278, 1850279, 1850281, 1850282, 1850283, 1850285, 1850286, 1850287, 1850289,  
1850290, 1850291, 1850293, 1850294, 1850295, 1850297, 1850298, 1850299, 1850301,  
1850302, 1850303.  }
\par
\medskip
\subsection*{The Boolean network of Zhang {\it et~al.} (2008)}
\par
This is the Boolean network given by the diagram of Fig. 2B in Zhang {\it et~al.} (2008). We introduce the variables as follws:
\begin{table}[h]
  \begin{center}
  {\renewcommand{\arraystretch}{1.4}\small 
  \begin{tabular}{|c|c|c|c|c|c|c|c|c|}
    \hline
   IL15	&  RAS	& ERK	&    JAK	& IL2RBT	& STAT3	& IFNGT	& FasL\\ \hline 	
   $x_1$ & $x_2$ & $x_3$ & $x_4$ & $x_5$	& $x_6$	& $x_7$	& $x_8$\\ \hline\hline

   PDGF  &  PDGFR &   PI3K	 &   IL2	&   BcIxL	  &       TPL2	  &  SPHK &	S1P	  \\ \hline	
   $x_9$ & $x_{10}$ & $x_{11}$ & $x_{12}$ & $x_{13}$ & $x_{14}$ & $x_{15}$ & $x_{16}$ \\ \hline\hline

   sFas   & Fas &  DISC  &  Caspase  &  Apoptosis	&\cellcolor{Mygrey} &\cellcolor{Mygrey} &\cellcolor{Mygrey}\\ \hline
   $x_{17}$ & $x_{18}$ & $x_{19}$ & $x_{20}$ & $x_{21}$ & \cellcolor{Mygrey} &\cellcolor{Mygrey} &\cellcolor{Mygrey}\\ \hline\hline
   LCK	 &  MEK	  &  GZMB	&  IL2RAT	&   FasT	&  RANTES	&  A20	&   FLIP\\	
  $x_{22}$ & $x_{23}$ & $x_{24}$ & $x_{25}$ & $x_{26}$ & $x_{27}$ & $x_{28}$ & $x_{29}$\\\hline 
 \end{tabular}
  }
  \label{tab: pairing2}
  \vskip 0.5cm
  \caption[Legend of variable names.]{\footnotesize Legend of variable names.}
    \end{center}
\end{table} 
\par
Then the polynomial representation of the Boolean network is given by
\begin{align*}
f_1 &= f_2 = f_4 = f_5 = f_{22} = x_1,\\
f_3 &= f_{23} = x_2,\\
f_6 &= f_{24} = x_4,\\
f_7 &= x_5 + x_6 + x_5x_6,\\
f_8 &= x_6(x_3 + x_5 + x_3x_5) + x_{14} + x_6(x_3 + x_5 + x_3x_5)x_{14},\\
f_9 &= f_{10} = x_9,\\
f_{11} &= x_{10},\\
f_{12} &= f_{13} = x_4 + x_{11} + x_4x_{11} +1,\\
f_{14} &= f_{29} = x_{11},\\
f_{15} &= x_{11} + x_{16} + x_{11}x_{16},\\
f_{16} &= f_{17} = x_{15},\\
f_{18} &= x_{17} + 1 + (x_1 + 1)(x_{11} + 1) + (x_{17} + 1)(x_1 + 1)(x_{11} + 1),\\
f_{19} &= x_{18},\\
f_{20} &= (x_1 + 1)x_{19},\\
f_{21} &= x_{20},\\
f_{25} &= x_{12},\\
f_{26} &= f_{27}= f_{28} = x_{14}.
\end{align*}
This Boolean network has more than 2 stable states. To find the fixed points, we can first ignore the last 8 Boolean functions $f_{22},\ldots, f_{29}$, since these functions correspond to some terminal nodes that no other node depends on them. After finding the fixed points for the system defined by the first $21$ functions, one can recover the fixed points for the Boolean network easily. For example, if $a = (a_1,a_2,\ldots, a_{29})$ is a fixed points, then $(a_1,\ldots, a_{21})$ must be a fixed point for the first $21$ functions, since they do not depend on the variables $x_{22},\ldots, x_{29}$. Also, by $f_{22}=x_1$, we must have $a_{22}=a_1$, and by $f_{23}=x_2$ we must have $a_{23}=a_2$, etc. The following 6 fixed points were dected by MAPLE:
\par
\medskip
{\footnotesize
00000000000110000111100010000,\;\;
00000000000110111111100010000 
\par   
00000001111001111000000001111,\;\;
11111111000000000110011100000
\par
11111111000000111000011100000,\;\;
11111111111001111000011101111.
}
\par
\medskip
\end{document}